\documentclass[conference]{IEEEtran}

\usepackage{graphicx}
\usepackage{amsmath}
\usepackage{amssymb}
\usepackage{amsthm}

\usepackage{array}

\newtheorem{theorem}{Theorem}

\newtheorem{proposition}{Proposition}

\begin{document}

\title{Feedback Reduction for MIMO Broadcast Channel with Heterogeneous Fading}

\author{\IEEEauthorblockN{Jin-Hao Li and Hsuan-Jung Su}
\IEEEauthorblockA{Graduate Institute of Communication Engineering\\
                  Department of Electrical Engineering\\
                  National Taiwan University, Taipei, Taiwan\\
                  Email: jinghaw2003@gmail.com, hjsu@cc.ee.ntu.edu.tw}
}


\maketitle

\begin{abstract}
This paper considers feedback load reduction for multiuser multiple input multiple output (MIMO) broadcast channel where the users' channel distributions are not homogeneous. A cluster-based feedback scheme is proposed such that the range of possible signal-to-noise ratio (SNR) of the users are divided into several clusters according to the order statistics of the users' SNRs. Each cluster has a corresponding threshold, and the users compare their measured instantaneous SNRs with the thresholds to determine whether and how many bits they should use to feed back their instantaneous SNRs. If a user's instantaneous SNR is lower than a certain threshold, the user does not feed back. Feedback load reduction is thus achieved. For a given number of clusters, the sum rate loss using
the cluster-based feedback scheme is investigated. Then the minimum
number of clusters given a maximum tolerable sum rate
loss is derived. Through simulations, it is shown that, when the number of users is large, full multiuser diversity can be
achieved by the proposed feedback
scheme, which is more efficient than the conventional schemes.
\end{abstract}
%

\section{Introduction}
Multiuser diversity can significantly improve system throughput when
the users suffer channel fluctuations. The performance gain of
multiuser diversity grows with the number of users when the scheduler
performs maximum throughput scheduling
\cite{Pv_Tse02}. For the broadcast channel, the dirty paper coding (DPC) \cite{M_H_M_Costa1983}
achieves a sum rate which was shown in \cite{J_Cioffi_04} to have a
growth rate $M\log\log(K)$, where $M$ is the number of transmit
antennas at the base station (BS) and $K$ is the number of users in
the system. However, this scheme has high complexity in
encoding/decoding and is difficult to be implemented. Therefore,
a suboptimal and low-complexity zero-forcing (ZF) beamforming
technique was proposed in \cite{TYoo2006} which also achieves the optimal growth
rate of the sum rate. The results both in DPC and ZF schemes were based on the full
channel state information (CSI) assumption at the transmitter and thus
the users are required to feedback perfect CSI to the BS. Although
the optimal sum rate can be achieved, the feedback load will
increase linearly with the number of users.

Various approaches have been proposed to limit the amount of
feedback load and investigate sum rate loss. In
\cite{Jindal06}\cite{Goldsmith07}, the quantized channel direction
information (CDI) was used to characterize the sum rate loss when the
ZF technique is considered. It was shown that the number of feedback bits of each
user needs to be increased linearly with the transmission power to maintain a constant sum rate loss. Another low-feedback-rate and practical scheme,
orthogonal random beamforming (ORB) was proposed in \cite{Hassibi05}.
In ORB, each user only feeds back the CSI and the beam index of its favorite beam to the BS. Therefore, the total amount of feedback
can be reduced from $MK$ CSI values to $K$ CSI values. Besides, the
sum rate loss is negligible when the number of users is
large \cite{M_Pugh10}. In an effort to further reduce feedback load,
a threshold based mechanism was proposed in \cite{Gesbert04} such that a user does not feed back when its CSI is below the threshold. In
that work, the design of
threshold does not take the scheduling algorithm into account. In
\cite{JH2010}, multiple thresholds were proposed. The design of multiple thresholds was based on the
order statistics of the signal-to-interference-noise-ratio (SINR) assuming that
greedy sum rate scheduling was performed in a {\it homogeneous} network where the users' channel gain distributions are the same.
Exhaustive search was used to obtain the thresholds, which
resulted in high computational complexity.

In this work, the closed form solution of the multiple thresholds in
\cite{JH2010} is found, and a more realistic assumption on the
channel distributions of the users is considered. We assume that the
distributions of the users' signal-to-noise rations (SNRs) are independent and
non-identical. Since the computational complexity is too high to
consider all the users' non-identical channel statistics, the
statistics of the users' channels are divided into multiple clusters.
The statistics in each cluster are used to calculate the corresponding threshold.
Each cluster corresponds to an SNR range and is quantized with a few
bits for differentiating the users' SNR levels falling in the same
cluster. In addition, the sum rate loss due to setting thresholds is
investigated. The performance of the proposed scheme is
compared with that of the conventional feedback scheme and single
threshold feedback scheme \cite{Gesbert04} in terms of sum rate,
feedback load and efficiency. Through simulations, it is shown that
the proposed cluster-based feedback scheme is more efficient than
conventional feedback schemes and achieves higher sum rate than the
single threshold feedback scheme.
\section{System Model}
The multiuser multiple-input multiple-out (MIMO) downlink system
is considered. The BS is equipped with $M$ antennas and there are
$K$ users, each having $N$ receive antennas. According to the ORB
strategy for multiuser transmission, the BS uses a precoding matrix
${\mathbf {W}}=[{\mathbf{w}}_1,{\mathbf{w}}_2,\ldots,{\mathbf{w}}_M]$
to simultaneously transmit signals, where ${\mathbf {w}}_{i}
\in{{\mathbb{C}}^{M\times1}}, i=1,2,\ldots,M$, are random orthogonal
vectors generated from isotropic distribution \cite{Marzetta1999}.
Let ${\mathbf{s}}=\sum_{m}{\mathbf{w}}_{m}s_{m}$ be the $M\times1$
vector of the transmitted signal, where $s_{m}$ is the $m$th transmitted
symbol.
The received signal for user $k$ is given  by
\begin{eqnarray}
{\mathbf{y}}_k&=&{\mathbf{H}}_{k}{\mathbf{Ws}}+{\mathbf{n}}_k
\end{eqnarray}
where ${\mathbf{H}}_{k}$ denotes the channel matrix between the
BS and user $k$. The elements of the channel matrix ${\mathbf{H}}_{k}$
are assumed to be independent identically distributed (i.i.d.)
complex Gaussian random variables with zero mean and variance $\sigma_{k}^2$. The noise term
for user $k$, denoted ${\mathbf{n}}_{k}$, is modeled as
i.i.d. zero mean complex Gaussian with covariance
matrix
${\mathbb{E}}[{\mathbf{n}}_{k}{\mathbf{n}}_{k}^{H}]=\sigma_{N}^2{\mathbf{I}}, \forall k$, where ${\mathbb{E}}$ denotes the expectation operation and $(\cdot)^{H}$
represents the transpose conjugate.

User $k$ uses the ZF receiver \cite{Heath01} to perform channel
inversion to the received signal ${\mathbf{y}}_{k}$. Thus, the
received signal after ZF receiver is given by
\begin{eqnarray}
{({\mathbf{H}}_{k}{\mathbf{W}})}^{\dagger}{\mathbf{y}}_k={\mathbf{s}}+{({\mathbf{H}}_{k}{\mathbf{W}})}^{\dagger}{\mathbf{n}}_k
\end{eqnarray}
where
${\mathbf{H}}^{\dagger}=({\mathbf{H}}^{H}{\mathbf{H}})^{-1}{\mathbf{H}}^{H}$
is the pseudo-inverse of ${\mathbf{H}}$. Under the equal power
assumption, let the transmit energy of each antenna be
$\frac{P}{M}$, where $P$ is the total transmit power at the BS.
The SNR for the $k$th user at the $m$th spatial channel is denoted
by $X_{m,k}$
\begin{eqnarray}
X_{m,k}=\frac{P/M}{\sigma_{N}^2[(({\mathbf{H}}_{k}{\mathbf{W}})^{H}({\mathbf{H}}_{k}{\mathbf{W}}))^{-1}]_{m}}
& m=1,2,\ldots,M
\end{eqnarray}
where $[{\mathbf{A}}]_{m}$ denotes the $m$th diagonal element of
matrix ${\mathbf{A}}$. Assuming $N\geq M$, it is well known
that $X_{m,k}$ is a chi-square random variable with $2(N-M+1)$
degrees of freedom \cite{R_W_Heath02}\cite{TT2007}. Then the
probability density function (PDF) of $X_{m,k}, \forall m$, can be expressed as
\begin{equation}
f_{k}{(x)}=\frac{\lambda_{k}^{N-M+1}x^{N-M}e^{-\lambda_{k}x}}{(N-M)!}
 \label{chi_square_distribution}\end{equation}
where $\lambda_{k}=\frac{M\sigma_{N}^2}{P\sigma_{k}^2}$. For
simplicity, we drop the index $m$, denote $X_{m,k}$ as $X_{k}$, and
restrict our analysis for the case $N=M$. Therefore, the distribution
of $X_{k}$ becomes the exponential distribution with
parameter $\lambda_{k}$.

Let ${X_{(1)},X_{(2)},\cdots, X_{(K)}}$, be the order statistics of
independent continuous random variables $X_{1},X_{2},\cdots,X_{K}$,
with the PDF ({\ref{chi_square_distribution}) in decreasing order,
i.e., ${X_{(1)}\geq X_{(2)}\geq \cdots \geq X_{(K)}}$. The
cumulative distribution function (CDF) of the largest order statistics
$X_{(1)}$ can be shown as
\begin{eqnarray}
F_{X_{(1)}}(x)=\prod_{i=1}^{K}(1-e^{-\lambda_{i}x})
\end{eqnarray}
and its corresponding PDF can be expressed as
\begin{eqnarray}
f_{X_{(1)}}(x)=\frac{1}{(K-1)!}\sum_{{\mathbf{T}}}F_{i_{1}}(x)\cdots
F_{i_{K-1}}(x)f_{i_{K}}(x)
\end{eqnarray}
where $\sum_{\mathbf{T}}$ denotes the summation over all $K!$
permutations $(i_{1},i_{2},\ldots,i_{K})$ of $(1,2,\ldots,K)$.
Applying the maximum sum rate scheduling algorithm, the sum rate of
the system can be written as follows:
\begin{eqnarray}
R&=&{\mathbb{E}}\left\{\sum_{m=1}^{M}\log_{2}\left(1+\max_{1\leq k\leq K}{X_{m,k}}\right)\right\}        \nonumber \\
&=& {M}\int_{0}^{\infty}\log_{2}(1+x)f_{X_{(1)}}(x)dx.
\label{max_sum_rate_formula}
\end{eqnarray}
In order to achieve the sum rate described in
(\ref{max_sum_rate_formula}), each user should feed back the index of the precoding vector on which it sees the highest SNR,
and the corresponding SNR value.
\begin{figure}[!t]
\centering
\includegraphics[width=0.4\textwidth]{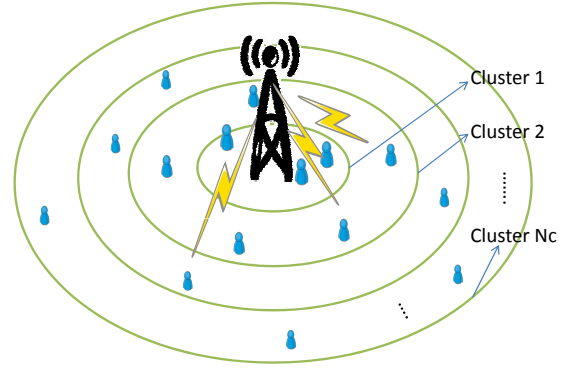}
\caption{Cluster-based Model.} \label{fig:
Cluster_based_feedback_model}
\end{figure}

\section{Cluster-based Feedback Model}
The transmission and feedback procedure can be described as follows.
From the previous channel condition or location information feedbacks from the users, the BS derives the users' mean SNR. The BS then groups the users' mean SNRs into $N_c$ clusters according to their magnitudes.
The mean SNRs in each cluster are similar in quantity and are used to derive one SNR threshold, denoted $r_{c,i}$ for cluster $i$. Note that derivation of the users' mean SNRs and the cluster thresholds is done periodically. Derivation of the users' mean SNRs does not need to be very accurate. The BS broadcasts periodically the threshold
set $\{r_{c,1}\geq r_{c,2}\geq\cdots\geq r_{c,N_{c}}\}$ to the users.

At every feedback instant, each user compares its measured instantaneous SNR with the
thresholds to determine which cluster its instantaneous SNR belongs
to, and feeds back to the BS the cluster index and the quantization bits of the instantaneous SNR
for this cluster. When a user's SNR is smaller
than $r_{c,N_{c}}$, the user does not feedback to the BS. The proposed
procedure takes a little downlink bandwidth for the BS to
periodically broadcast the threshold set, in exchange of reduction
of the uplink feedback bandwidth.

To derive the thresholds, the overall statistics of users are
divided into multiple clusters according to the mean SNRs
$\frac{1}{\lambda_{1}}, \frac{1}{\lambda_{2}}, \cdots,
\frac{1}{\lambda_{K}}$. The means of the random variables ${X_{1},
X_{2},\cdots,X_{K}}$ are ranked with decreasing order as
$\frac{1}{\lambda_{(1)}}\geq \frac{1}{\lambda_{(2)}}\geq \cdots\geq
\frac{1}{\lambda_{(K)}}$ and uniformly divided into $N_{c}$ clusters
with size $L=K/N_{c}$ as
$\left\{\frac{1}{\lambda_{(1)}},\frac{1}{\lambda_{(2)}},\cdots,\frac{1}{\lambda_{(L)}}\right\}$,$\cdots$,
$\left\{\frac{1}{\lambda_{(L(N_{c}-1)+1)}},\frac{1}{\lambda_{(L(N_{c}-1)+1)}},\cdots,\frac{1}{\lambda_{(K)}}\right\}$.
To simplify the notation, we let random variable $Y_{m}^{n}$
represent the $m$th statistic in the $n$th cluster. Then,
$Y_{m}^{n}$ is exponentially distributed with mean value
$\frac{1}{\lambda_{(L(n-1)+m)}}$. The random variables in cluster
$n$ are $\left\{Y_{1}^{n},Y_{2}^{n},\cdots,Y_{L}^{n}\right\}$, and
let $Y_{(1)}^{n}\geq Y_{(2)}^{n}\geq\cdots\geq Y_{(L)}^{n}$ be the
order statistics of them. For a measured instantaneous SNR $y_i$ at
user $i$, we say that the rank of $y_i$ in cluster $n$ is $d$ if
$\underbrace{Y_{(1)}^{n}\geq Y_{(2)}^{n} \geq Y_{(3)}^{n}\cdots
}_{\text{$(d-1)$ variables}}\geq y_{i} \geq
\underbrace{Y_{(d+1)}^{n}\cdots \geq Y_{(K)}^{n}}_{\text{ $(L-d)$
variables}}$. In the following, we will discuss how to derive the
threshold in each cluster for heterogenous and homogeneous channel
distributions to reduce the feedback load.
\subsection{Heterogenous Case}
\subsubsection{Cluster-based Type-I}
With maximum sum rate scheduling, a user will be scheduled when its SNR on a particular precoding vector is the highest among the users. On the other hand, if a user has a low SNR, it is unlikely to be scheduled. For this user, feeding back CSI is wasteful of the uplink radio resource. The threshold of each cluster is designed according to the probability of a user's measured instantaneous SNR being a particular rank in that cluster.
Let $P_{m,n}^{i}(r)$ denote the probability that user $m$ is ranked $n$ among the $L$ users in
cluster $i$ when its instantaneous SNR is $r$.
\begin{eqnarray}
\begin{array}{l}
P_{m,n}^{i} (r)=P\{Y_{m}^{i}=Y_{(n)}^{i}|Y_{m}^{i}=r\} \nonumber  \\
=\displaystyle{\sum_{\mathbf{S}}}P\{\underbrace{Y_{t_{1}}^{i}\geq
Y_{t_{2}}^{i}\geq \cdots \geq Y_{t_{n-1}}^{i}}_{\text{$(n-1)$
variables}}\geq r
 \geq \underbrace{Y_{t_{n+1}}^{i} \cdots \geq Y_{t_{L}}^{i}}_{\text{ $(L-n)$ variables
 }}\}\\
=\displaystyle{\frac{{\displaystyle{\sum_{\mathbf{S}}}e^{{-\displaystyle{\sum_{a=1}^{n-1}\lambda_{(L(i-1)+t_{a})}r}}}\displaystyle{\prod_{b=n+1}^{L}(1-e^{{-\lambda_{(L(i-1)+t_{b})}r}})}}}{(n-1)!(N_{c}-n)!}}\\
\end{array}
\end{eqnarray}
where $\sum_{\mathbf{S}}$ denotes the summation over all
permutations $(t_{1},\cdots,t_{n-1},t_{n+1},\cdots,t_{L})$ of
$(1,2,\cdots,m-1,m+1,\cdots,L)$ in the cluster $i$. For
example, when the instantaneous SNR of user $t$ in cluster $1$ is infinity,
the rank of user $t$ among the $L$ users in cluster $1$ is one
with probability one, i.e.,
$P_{t,1}^1 (\infty) = P\{Y_{t}^{1}=Y_{(1)}^{1}|Y_{t}^{1}=\infty\}=1$.
Being valid conditional probabilities, the $P_{m,n}^{i}(r)$'s satisfy
\begin{eqnarray}
\sum_{n=1}^{L}P_{m,n}^{i} (r)=1 ,&m=1,2,\cdots,L.
\end{eqnarray}
The most probable rank of user $m$ in cluster $i$ when its instantaneous SNR is $r$ can be obtained by
\begin{equation}
\hat{n}=\arg \max_{n\in \{1,2,\ldots,L \}} P_{m,n}^{i} (r).
\end{equation}

Let $Q_{m,n}^{i}$ be defined such that $P_{m,n}^{i} \left( Q_{m,n}^{i}\right)=P_{m,n+1}^{i} \left( Q_{m,n}^{i}\right)$. It can be seen that when the instantaneous SNR of user $m$ falls in the range $\left[Q_{m,n}^{i},Q_{m,n-1}^{i}\right)$, the most probable rank of user $m$ in cluster $i$ is $n$.
For maximum sum rate scheduling, we let only the users who are most likely to be rank one in each cluster to feedback.
Therefore, a threshold for each cluster $i$ is set as
\begin{eqnarray}
r_{c,i}=\max\{Q_{1,1}^{i},Q_{2,1}^{i},\cdots,Q_{L,1}^{i} \} &
i=1,2,\cdots,N_{c}
\end{eqnarray}
which satisfies the following inequality
\begin{eqnarray}
r_{c,1}\geq r_{c,2}\geq\cdots\geq r_{c,N_{c}}.
\end{eqnarray}
To reduce the computational complexity, another
simple and low complexity method is proposed in the next section.
\subsubsection{Cluster-based Type-II} \label{type-II}
We approximate by assuming that the random
variables $Y_{m}^{i}, m=1,2,\cdots,L$ in cluster $i$ have the same
exponential distribution with mean $\mu_{c_{i}}$ obtained by
\begin{eqnarray}
\mu_{c_{i}}=\frac{\sum_{m=1}^{L}\lambda_{(L(i-1)+m)}}{L}.
\end{eqnarray}
Then, $P_{m,n}^{i} (r)$ is the same for all users in a cluster, and can be obtained by
\begin{eqnarray}
P_{m,n}^{i} (r)=\frac{(L-1)!\exp{(-\mu_{c_{i}}r)}^{n-1}{(1-\exp(-\mu_{c_{i}}r))}^{L-n}}{(n-1)!(L-n)!}.
\end{eqnarray}
With this approximation, the closed-form solution of $r_{c,i}$ can be derived as
\begin{eqnarray}
r_{c,i}=Q_{1,1}^{i}=\frac{1}{\mu_{c_{i}}}\ln{L} , i=1,2,\cdots,
N_{c}. \label{closed_form_threshold}
\end{eqnarray}
\subsection{Homogeneous Case}
The homogeneous case can be viewed as a special case of the
cluster-based type-II scenario using only one cluster $(N_{c}=1,
K=L)$. The mean values of the random variables $Y_{m}^{1},
m=1,2,\cdots,K$ are the same, i.e.,
$1/\lambda=1/\lambda_{1}=\cdots=1/\lambda_{K}$. As in \cite{JH2010},
multiple thresholds are set according to the most probable rank as
\begin{eqnarray}
r_{c,p}=Q_{1,p}^{1}=\frac{1}{\lambda}\ln{\left(\frac{K}{p}\right)},  & p=1,2, \cdots,
N_{c}.
\end{eqnarray}

For all cases, at each feedback instant, when a user's instantaneous
SNR is greater than the smallest threshold $r_{c,N_{c}}$, the user needs to
feed back to the BS using $B_{C}=\log_{2}\lceil N_{c} \rceil $ bits
to indicate which region (between two adjacent thresholds) its instantaneous SNR belongs to. In addition, in order
to differentiate the users SNR in the same region $i$, the
region $i$ which is represented by $\mathfrak{C_{i}}$ is further
quantized with $b_{i}$ bits. Therefore, the feedback bits of each
user include two parts: one is the region index with $B_{C}$ bits
and the other is the quantization bits for that region.

\section{Analysis of Sum Rate Loss }
A natural question to ask
is how many clusters (thresholds) should be set.
Obviously, if many clusters are used, the number of region index bits $B_{C}$
is increased. On the other hand, when a small number of
clusters are used, the smallest threshold $r_{c,N_{c}}$ becomes
large. Then, the sum rate loss caused by the threshold $r_{c,N_{c}}$ may
increase. Therefore, an appropriate number of clusters is important
for the design. We now analyze the sum rate loss of the
system caused by the smallest threshold $r_{c,N_{c}}$. We define a
random variable $Z_{i}$ as follows:
\begin{eqnarray}
Z_{i}=\left\{\begin{array}{cc}
    0 ,& X_{i}>r_{c,N_{c}} \\
    X_{i} ,& X_{i}\leq r_{c,N_{c}} \\
  \end{array}\right..
\end{eqnarray}
According to the maximum sum rate criterion, the exact SNR loss
for the system is $Z_{(1)}=\max\{Z_{1},Z_{2},\cdots,Z_{K}\}$. The
probability of the rate loss event can be obtained by
\begin{eqnarray}
P_{L}=P\{X_{(1)}\in
(0,r_{c,N_{c}})\}=\displaystyle{\prod_{i=1}^{K}}(1-\exp{(-\lambda_{i}r_{c,N_{c}})}).
\end{eqnarray}
Therefore, the sum rate loss can be expressed as
\begin{eqnarray}
\Delta R=ME\{\log_{2}(1+Z_{(1)})\}P_{L}.
\end{eqnarray}

\begin{theorem}
\cite{D_B2006} Let the means and variances of the
random variables $Z_1, Z_2, \ldots, Z_K$ be
${\mathbf{\mu}}=(\mu_{1},\mu_{2},\cdots,\mu_{K})$ and
${\mathbf{\sigma^{2}}}=(\sigma_{1}^2,\sigma_{2}^2,\cdots,\sigma_{K}^2)$, respectively.
The closed form upper bounds on the expected value of the largest
order statistic is:
\begin{eqnarray}
E\{Z_{(1)}\} &\leq& {\sum_{i=1}^{K}\left\{\begin{array}{c}
                 \displaystyle{\frac{\mu_{i}+\sqrt{(\mu_{i}-T)^{2}+\sigma_{i}^{2}}}{2}}\nonumber \\
              \end{array}\right\}+\frac{(2-K)T}{2}} \nonumber \\ &\triangleq&
              \mu_{Z_{(1)}^{U}}
 \label{expacation_upper_bound}
\end{eqnarray}
where $T=\max_{1\leq j\leq
K}\{\mu_{j}+\frac{K-2}{2\sqrt{K-1}}\sigma_{j}\}$.
\end{theorem}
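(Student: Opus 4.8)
The plan is to bound $E\{Z_{(1)}\}$ via a family of affine majorants of the max function parametrized by a free scalar $T$, then optimize over $T$. The starting observation is the elementary pointwise inequality
\begin{eqnarray}
\max\{z_{1},\ldots,z_{K}\} \;\leq\; T + \sum_{i=1}^{K}(z_{i}-T)^{+},
\end{eqnarray}
valid for every real $T$, where $(x)^{+}=\max\{x,0\}$. Taking expectations gives $E\{Z_{(1)}\}\leq T + \sum_{i=1}^{K}E\{(Z_{i}-T)^{+}\}$, so the problem reduces to bounding each one-sided moment $E\{(Z_{i}-T)^{+}\}$ using only $\mu_{i}$ and $\sigma_{i}^{2}$.

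The second step is the distribution-free bound on $E\{(Z_{i}-T)^{+}\}$ in terms of the first two moments. For any random variable with mean $\mu$ and variance $\sigma^{2}$ one has
\begin{eqnarray}
E\{(Z-T)^{+}\} \;\leq\; \frac{(\mu-T)+\sqrt{(\mu-T)^{2}+\sigma^{2}}}{2},
\end{eqnarray}
which is the classical Scarf-type inequality. I would prove it by writing $(z-T)^{+}=\tfrac{1}{2}\big[(z-T)+|z-T|\big]$, applying Jensen to the convex function $|\cdot|$ after shifting, or more sharply by the quadratic-majorant trick $(z-T)^{+}\leq \tfrac{1}{2}\big(\alpha + (z-T)^{2}/\alpha\big)\cdot\mathbf{1}+\ldots$ and optimizing the slack parameter $\alpha>0$; the optimal $\alpha$ is $\sqrt{(\mu-T)^{2}+\sigma^{2}}$ and yields exactly the stated right-hand side. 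Substituting this into the sum from the first step gives
\begin{eqnarray}
E\{Z_{(1)}\} \;\leq\; T + \sum_{i=1}^{K}\frac{(\mu_{i}-T)+\sqrt{(\mu_{i}-T)^{2}+\sigma_{i}^{2}}}{2} \;=\; \sum_{i=1}^{K}\frac{\mu_{i}+\sqrt{(\mu_{i}-T)^{2}+\sigma_{i}^{2}}}{2} + \frac{(2-K)T}{2},
\end{eqnarray}
which is precisely $\mu_{Z_{(1)}^{U}}$ for the given $T$; note the per-term $-T/2$ contributions collect into $-KT/2$, and together with the leading $+T$ this produces the $(2-K)T/2$ term.

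The third step is to justify the specific choice $T=\max_{1\leq j\leq K}\{\mu_{j}+\tfrac{K-2}{2\sqrt{K-1}}\sigma_{j}\}$. Since the inequality holds for every $T$, it holds for this one; what remains is to check it is a good (ideally the optimal) choice. I would differentiate the right-hand side $g(T)=\tfrac{(2-K)}{2}T+\sum_{i}\tfrac{1}{2}\sqrt{(\mu_{i}-T)^{2}+\sigma_{i}^{2}}$ in $T$, set $g'(T)=0$, and observe that the stationarity condition $\sum_{i}\frac{T-\mu_{i}}{\sqrt{(\mu_{i}-T)^{2}+\sigma_{i}^{2}}}=K-2$ does not admit a clean closed form; the paper's $T$ is the value obtained by applying this condition in the worst single-index case or by a standard relaxation. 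The main obstacle is exactly this last point: deriving (or merely certifying) the closed-form $T$ rather than leaving it as a generic parameter. In practice I expect the cleanest route is to cite \cite{D_B2006} for the optimality of that $T$ and simply verify that plugging it in keeps the bound valid, since validity for all $T$ is already established by steps one and two. Everything else is routine: the majorant in step one is immediate, and the moment bound in step two is a one-line optimization over $\alpha$.
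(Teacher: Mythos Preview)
The paper does not prove this theorem; it is quoted from the cited reference \cite{D_B2006} and used as a black box to bound the sum rate loss. There is therefore no proof in the paper to compare against.

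That said, your sketch is the standard route to this inequality and is essentially correct. The decomposition $\max_i z_i \le T + \sum_i (z_i-T)^+$ followed by the Scarf bound $E\{(Z-T)^+\}\le \tfrac{1}{2}\big[(\mu-T)+\sqrt{(\mu-T)^2+\sigma^2}\big]$ does yield exactly the displayed right-hand side for every real $T$, and your algebra collecting the $-T/2$ terms into $(2-K)T/2$ is right. You have also correctly isolated the only non-routine point: the stationarity condition $\sum_i (T-\mu_i)/\sqrt{(\mu_i-T)^2+\sigma_i^2}=K-2$ has no closed-form root in the heterogeneous case, and the stated $T=\max_j\{\mu_j+\tfrac{K-2}{2\sqrt{K-1}}\sigma_j\}$ is a surrogate (exact in the i.i.d.\ case, where each summand equals $(K-2)/K$ forces $(T-\mu)/\sqrt{(T-\mu)^2+\sigma^2}=(K-2)/K$, giving $T=\mu+\tfrac{K-2}{2\sqrt{K-1}}\sigma$). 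Since the bound holds for all $T$, plugging in this particular $T$ is valid regardless of whether it is the minimizer, so your plan to cite \cite{D_B2006} for the choice of $T$ and verify validity is sound---and is precisely what the paper itself does.
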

Applying Jensen's
inequality and the upper bound of $E\{Z_{(1)}\}$, the sum rate loss on a certain beam can be
bounded by
\begin{eqnarray}
\frac{\Delta R}{M}&=&E\{\log_{2}(1+Z_{(1)})\}P_{L}\leq
\log_{2}(1+E\{Z_{(1)}\})P_{L} \nonumber \\
&\leq&\log_{2}(1+\mu_{Z_{(1)}^{U}})P_{L}.
\label{rate_loss_upper_bound}
\end{eqnarray}
Using (\ref{rate_loss_upper_bound}), for a given tolerable sum rate loss $\Delta R_{U}$, the minimum
number of clusters (thresholds) required can be determined.

\begin{proposition}
When the total number of users approaches infinity,
the sum rate loss caused by the smallest finite threshold $r_{c,N_{c}}$ is negligible. Thus, the full multiuser diversity can be achieved.
\end{proposition}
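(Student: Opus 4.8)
The plan is to show that $\Delta R \to 0$ as $K \to \infty$ by establishing that the rate-loss probability $P_L = \prod_{i=1}^{K}(1 - e^{-\lambda_i r_{c,N_c}})$ vanishes fast enough to kill the (at most logarithmically growing) factor $\log_2(1 + \mu_{Z_{(1)}^U})$ in the bound \eqref{rate_loss_upper_bound}. First I would fix attention on the smallest threshold $r_{c,N_c} = \frac{1}{\mu_{c_{N_c}}}\ln L$ from \eqref{closed_form_threshold}; since the cluster size $L = K/N_c$ and (for a fixed tolerable loss, or even for $N_c$ growing sub-linearly) $L \to \infty$ as $K \to \infty$, the threshold $r_{c,N_c}$ stays bounded away from $0$ — in fact each factor $1 - e^{-\lambda_i r_{c,N_c}}$ is a constant strictly less than $1$, uniformly bounded above by some $\rho < 1$ provided the $\lambda_i$ are bounded above (equivalently, the users' mean SNRs are bounded below by a positive constant, which is the natural non-degeneracy assumption for a heterogeneous network). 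Hence $P_L \le \rho^{K} \to 0$ geometrically.

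Next I would control the other factor. By the Theorem quoted above, $E\{Z_{(1)}\} \le \mu_{Z_{(1)}^U} = \sum_{i=1}^K \frac{\mu_i + \sqrt{(\mu_i - T)^2 + \sigma_i^2}}{2} + \frac{(2-K)T}{2}$. Since each $Z_i$ is the truncation of $X_i$ to $[0, r_{c,N_c}]$, its mean $\mu_i$ and variance $\sigma_i^2$ are both bounded (by constants depending only on $r_{c,N_c}$, itself bounded), so a crude estimate gives $\mu_{Z_{(1)}^U} = O(K)$ at worst, and therefore $\log_2(1 + \mu_{Z_{(1)}^U}) = O(\log K)$. Combining the two estimates,
\begin{eqnarray}
\frac{\Delta R}{M} \le \log_2(1 + \mu_{Z_{(1)}^U})\, P_L = O(\log K)\cdot \rho^{K} \to 0
\end{eqnarray}
as $K \to \infty$. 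Since the per-beam loss vanishes and there are $M$ beams with $M$ fixed, the total loss $\Delta R \to 0$, which is exactly the statement that full multiuser diversity is retained.

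The main obstacle is making the "$r_{c,N_c}$ stays bounded and the per-user factors are uniformly $\le \rho < 1$" step rigorous without hidden assumptions: one must observe that $\mu_{c_{N_c}}$, being the average of the $\lambda$'s in the last (smallest-mean-SNR) cluster, is controlled by the largest $\lambda_i$, so $r_{c,N_c} = \frac{\ln L}{\mu_{c_{N_c}}}$ could in principle shrink if the worst user's SNR degrades with $K$; ruling this out needs the mild assumption that the mean SNRs do not decay to zero, i.e., $\sup_i \lambda_i < \infty$. Under that assumption the argument is clean, and I would remark that even if $N_c$ is allowed to grow with $K$ (to keep the loss below a target), as long as $L = K/N_c \to \infty$ the conclusion holds, since then $\ln L \to \infty$ keeps the threshold from collapsing while $P_L$ still decays geometrically in $K$.
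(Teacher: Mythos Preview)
Your argument is correct, but it takes a considerably longer route than the paper. The key simplification you miss is that, by the very definition of $Z_i$ (namely $Z_i = X_i$ if $X_i \le r_{c,N_c}$ and $Z_i = 0$ otherwise), one has $Z_{(1)} \le r_{c,N_c}$ deterministically. Hence the paper bypasses Theorem~1 entirely and bounds the per-beam loss directly by $\log_2(1 + r_{c,N_c})\,P_L$, a constant times $P_L$, which tends to zero. Your detour through $\mu_{Z_{(1)}^U}$ yields only the weaker estimate $\log_2(1 + \mu_{Z_{(1)}^U}) = O(\log K)$, forcing you to also argue a geometric rate for $P_L$; this works, but none of it is needed once the trivial pointwise bound on $Z_{(1)}$ is noticed.

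On the other hand, your proposal is more careful than the paper in one respect: you make explicit the non-degeneracy assumption $\sup_i \lambda_i < \infty$ (equivalently, mean SNRs bounded away from zero) that is needed for $P_L \to 0$, and you discuss how $r_{c,N_c}$ behaves as $K$ grows through $L = K/N_c$. The paper simply asserts $\lim_{K\to\infty} P_L = 0$ and treats the threshold as a fixed finite number (as signalled by the phrase ``smallest finite threshold'' in the statement), so these issues are swept under the rug there. Your added discussion is sound and would make the argument more robust, but for matching the paper's level of rigor the two-line bound $Z_{(1)} \le r_{c,N_c}$ suffices.
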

\begin{proof}
When the total number of users $K$ approaches to infinity,
$\displaystyle{\lim_{K\rightarrow \infty}}P_{L}=0$. In
addition, $Z_{(1)}$ is bounded by $r_{c,N_{c}}$. The sum rate
loss on a certain beam becomes zero
\begin{eqnarray}
\lim_{K\rightarrow \infty}\frac{\Delta R}{M}&=&\lim_{K\rightarrow \infty}E\{\log_{2}(1+Z_{(1)})\}P_{L} \nonumber \\
&\leq&\lim_{K\rightarrow \infty}
\log_{2}(1+E\{Z_{(1)}\})P_{L} \nonumber \\
&\leq&\lim_{K\rightarrow \infty}\log_{2}(1+r_{c,N_{c}})P_{L} \nonumber \\
&=&0.
\end{eqnarray}
%
\end{proof}

\section{Bit Allocation and Feedback Load Analysis }
\subsection{Bit Allocation}
Let $r_{c,0} = \infty$. Assume that the SNR region $i$ $[r_{c,i}, r_{c,i-1})$, denoted
$\mathfrak{C}_{i}, i=1,2,\cdots,N_{c}$, is quantized with $b_{i}$ bits. In region $\mathfrak{C}_{i}$, the quantization
levels using $b_{i}$ bits are expressed by $q_{t}^{i}, t=1,2,\cdots,2^{b_{i}}$, obtained by a pdf quantizer \cite{Jose06}. Thus, each level will occur with the same probability.
Under the per user average feedback load constraint $C_{k}$ for user $k$, the available bits will be assigned to the regions to maximize sum rate. Let
$P_{\mathfrak{C}_{i}}=P\{X_{(1)}\in \mathfrak{C}_{i} \},
i=1,2,\cdots,N_{c}$. The bit allocation problem can be
described as follows:
\begin{eqnarray}
&\displaystyle{\max_{(b_{1},b_{2},\cdots,b_{N_{c}})}}& {M\sum_{i=1}^{N_{c}}\frac{P_{\mathfrak{C}_{i}}}{2^{b_i}}\sum_{t=1}^{2^{b_{i}}}\log(1+q_{t}^{i})} \nonumber \\
& \text{s.t}& \sum_{i=1}^{N_{c}}P_{\mathfrak{C}_{k,i}}b_{i} \leq
C_{k}, k=1,2,\cdots,K,   \label{opt_bit}
\end{eqnarray}
where $P_{\mathfrak{C}_{k,j}}=P\{X_{k} \in
\mathfrak{C}_{j}\}=e^{-\lambda_{k}r_{c,j}}-e^{-\lambda_{k}r_{c,j-1}}$.
The closed-form integer solution for (\ref{opt_bit}) does not exist.
We use exhaustive search to find the optimal bit allocation set to
maximize the sum rate.
\subsection{Feedback Load Analysis}
When a user's instantaneous SNR is smaller than the smallest threshold
$r_{c,N_{c}}$, the user does not need to feedback.
Thus, the feedback probability for user $k$ is
$e^{-\lambda_{k}r_{c,N_{c}}}$. The average total feedback
load can be expressed as
\begin{eqnarray}
F_{b}&=&M\sum_{k=1}^{K}\{e^{(-\lambda_{k}r_{c,N_{c}})}{(B_{C}+C_{k})}\}
\end{eqnarray}
\section{Simulation Results}
In this section, we show the sum rate and feedback load performance
for the proposed feedback scheme. The BS is equipped with $M=4$
antennas and the total number of users is $K=10\sim100$. In order to
perform ZF beamforming, we let the number of receive antennas $N=4$.
The total transmit power $P$ is 10W, while the additive white Gaussian noise power at the receivers $\sigma_N^2$ is 1W. Note that these numbers are selected only for illustration purpose.
The elements of the channel matrix
$\mathbf{H}_{i}$ for the $i$th user are assumed to be i.i.d. complex
Gaussian distribution with zero and variance $\sigma_{i}^2$, where
$\sigma_{i}^2$ are drawn uniformly from the interval $[0,1]$ to
model heterogeneous channels. The number of clusters in type-I and
type-II schemes is set to four (thus $B_{C}=2$) according to the
tolerable sum rate loss $\Delta R_{U}=10^{-2}$ $bps/Hz$. The
feedback load constraint of user $k$ is $C_{k}=0.8, k=1,2,\cdots,K$.
In the simulation, the proposed feedback schemes are compared with
the conventional scheme and the single-threshold schemes proposed in
\cite{Gesbert04}. In the conventional feedback scheme, no matter
what the instantaneous SNR is, it is always quantized with $3$ bits.
In the single-threshold scheme, the region $(r_{th},\infty)$ of the
SNR is quantized using $3$ bits. The single threshold $r_{th}$ is
established according to the scheduling outage probability
$P_{out}$.

In Fig.~\ref{fig: sum_rate}, the type-I feedback scheme achieves the
highest sum rate, and the low-complexity type-II scheme
achieves almost the same rate as the type-I scheme. As shown in Fig.~\ref{fig: feedback_load}, the feedback
load of the conventional scheme increases linearly with the number of users. Using our
proposed schemes, the total number of feedback bits can be
dramatically reduced. Overall, the proposed schemes use fewer bits to achieve higher sum rate than the conventional scheme. The single-threshold scheme has lower feedback load, but suffers significantly in the sum rate performance.

In Fig.~\ref{fig: efficiency}, we plot the sum rate vs. the total
feedback load as an indication of the efficiency. The type-I scheme
is the most efficient (i.e., making best use of the feedback bits), but has high computational complexity.
The low-complexity type-II scheme not only achieves high sum rate
but also reduces the feedback load significantly. The single-threshold
feedback scheme only achieves a sum rate of about $11(bps/Hz)$ with a small feedback load.
\begin{figure}
\centering
\includegraphics[width=0.42\textwidth]{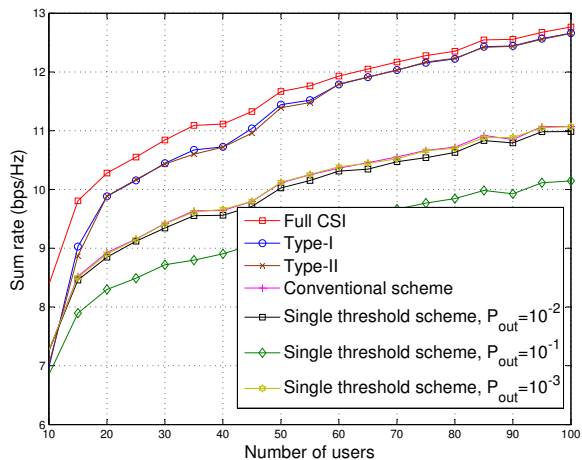}
\caption{Sum rate comparison between different feedback schemes.}\label{fig: sum_rate}
\end{figure}

\begin{figure}
\centering
\includegraphics[width=0.42\textwidth]{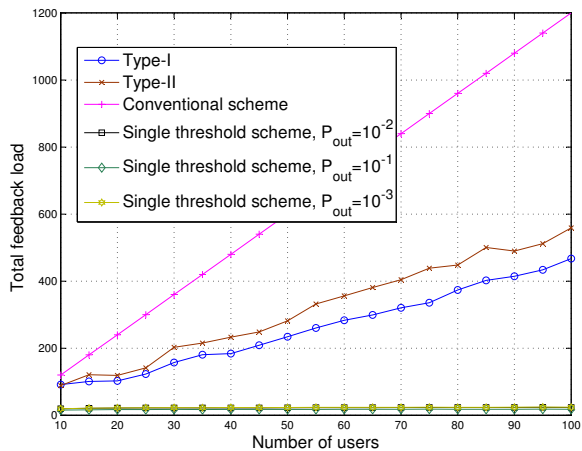}
\caption{Feedback load comparison between different feedback
schemes.}\label{fig: feedback_load}
\end{figure}

\begin{figure}
\centering
\includegraphics[width=0.42\textwidth]{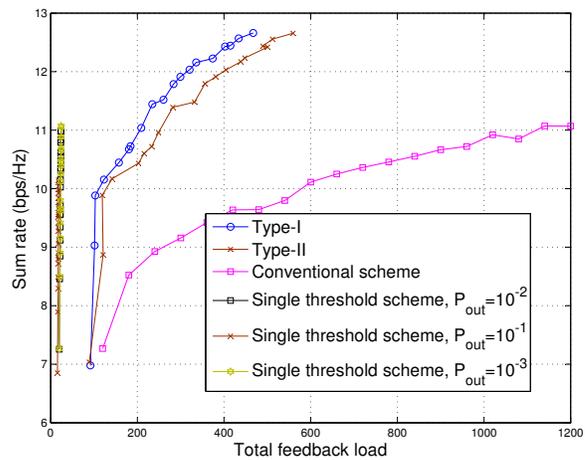}
\caption{The efficiencies of different feedback schemes.}\label{fig: efficiency}
\end{figure}

\section{Conclusion}
In this paper, we investigated the feedback load reduction problem in
multiuser MIMO broadcast system. We
proposed a cluster-based feedback scheme to reduce the feedback load
in heterogenous and homogeneous channels. The bit allocation problem
for the multiple-cluster feedback scheme was also discussed. The
simulation results showed that, compared to the existing feedback schemes, the cluster-based feedback scheme can make the best use of the feedback bits to achieve good feedback load reduction while maintaining good sum rate
performance.
\bibliographystyle{IEEEbib}
\bibliography{ref}
\end{document}